\documentclass{llncs}

\usepackage{array}
\usepackage{enumerate}
\usepackage{graphicx}
\usepackage{paralist}
\usepackage{amsmath,amssymb}
\usepackage{subfigure}
\usepackage{wrapfig}
\usepackage{url}
\usepackage{booktabs}
\usepackage{todonotes}
\usepackage[ansinew]{inputenc} 
\usepackage{listings}

\newcolumntype{L}[1]{>{\raggedright\let\newline\\\arraybackslash\hspace{0pt}}m{#1}}
\newcolumntype{C}[1]{>{\centering\arraybackslash}p{#1}}
\newcolumntype{R}[1]{>{\raggedleft\let\newline\\\arraybackslash\hspace{0pt}}m{#1}}

\begin{document}

\title{Relational Algebra for In-Database Process Mining}

\author{
Remco Dijkman\inst{1} \and
Juntao Gao\inst{2} \and
Paul Grefen\inst{1} \and
Arthur ter Hofstede\inst{3,1}
}

\institute{
Eindhoven University of Technology, The Netherlands\\
\email{\{r.m.dijkman|p.w.p.j.grefen\}@tue.nl}\\
\and
Northeast Petroleum University, China\\
\email{gjt@nepu.edu.cn}\\
\and
Queensland University of Technology, Australia\\
\email{a.terhofstede@qut.edu.au}
}

\maketitle

\begin{abstract}
The execution logs that are used for process mining in practice are often obtained by querying an operational database and storing the result in a flat file. Consequently, the data processing power of the database system cannot be used anymore for this information, leading to constrained flexibility in the definition of mining patterns and limited execution performance in mining large logs. Enabling process mining directly on a database - instead of via intermediate storage in a flat file - therefore provides additional flexibility and efficiency. To help facilitate this ideal of in-database process mining, this paper formally defines a database operator that extracts the `directly follows' relation from an operational database. This operator can both be used to do in-database process mining and to flexibly evaluate process mining related queries, such as: ``which employee most frequently changes the `amount' attribute of a case from one task to the next''. We define the operator using the well-known relational algebra that forms the formal underpinning of relational databases. We formally prove equivalence properties of the operator that are useful for query optimization and present time-complexity properties of the operator. By doing so this paper formally defines the necessary relational algebraic elements of a `directly follows' operator, which are required for implementation of such an operator in a DBMS.
\end{abstract}

\keywords{process mining, relational algebra, formal methods}

\section{Introduction}

Enabling process mining directly on an operational database or data warehouse presents new opportunities. It provides additional flexibility, because event logs can be constructed on-demand by writing an SQL query, even if they are distributed over multiple tables, as is the case, for example, in SAP~\cite{ingvaldsen2007}. It even provides opportunities for fully flexible querying, allowing for the formulation of practically any process mining question. Moreover, process mining directly on a database leverages the proven technology of databases for efficiently handling large data collections in real time, which is one of the challenges identified in the process mining manifesto~\cite{aalst2012}. This can speed up process mining, especially when extremely large logs are going to be considered, such as call behavior of clients of a telecom provider, or driving behavior of cars on a road network.

\begin{table}[!tb]
	\centering
	\caption{Database table $R$ that contains a log}\label{tab:exlog}
	\begin{tabular}{
		C{0.75cm} C{1cm} C{1.75cm} C{2.0cm} p{.01cm} 
		C{0.75cm} C{1cm} C{1.75cm} C{2.0cm}}
		\cmidrule{1-4}\cmidrule{6-9}
		case & activity & start\_time & end\_time & & case & activity & start\_time & end\_time\\
		\cmidrule{1-4}\cmidrule{6-9}
		1	&	$A$	&	 00{:}20	&	 00{:}22	&	&	4	&	$A$	&	 03{:}06	 &	 03{:}10	\\
		1	&	$B$	&	 02{:}04	&	 02{:}08	&	&	4	&	$B$	&	 05{:}04	 &	 05{:}09	\\
		1	&	$E$	&	 02{:}32	&	 02{:}32	&	&	4	&	$E$	&	 07{:}26 &	 07{:}29	\\
		2	&	$A$	&	 02{:}15	&	 02{:}20&	&	5	&	$A$	&	 03{:}40 &	 03{:}44	\\
		2	&	$D$	&	 03{:}14	&	 03{:}19	&	&	5	&	$B$	&	 05{:}59 &	 06{:}06	\\
		2	&	$E$	&	 05{:}06	&	 05{:}07	&	&	5	&	$E$	&	 07{:}49 &	 07{:}52	\\
		3	&	$A$	&	 02{:}27	&	 02{:}29	&	&	6	&	$A$	&	 04{:}18 &	 04{:}20	\\
		3	&	$D$	&	 04{:}17	&	 04{:}20	&	&	6	&	$C$	&	 07{:}08 &	 07{:}12	\\
		3	&	$E$	&	 06{:}51	&	 06{:}53	&	&	6	&	$E$	&	 09{:}05 &	 09{:}07	\\
		\cmidrule{1-4}\cmidrule{6-9}
	\end{tabular}
\end{table}

To illustrate the potential benefits of process mining on a database, Table~\ref{tab:exlog} shows a very simple event log as it could be stored in a database relation. In practice, such a log would contain thousands of events, as is the case for the well know BPI Challenge logs (e.g.~\cite{dongen2011}), and even millions of events in the examples of the telecom provider and the road network mentioned above. Table~\ref{tab:exlog} shows the activities that were performed in an organization, the (customer) case for which these activities were performed and the start and end time of the activities. Given such a relation, it is important in process mining to be able to retrieve the `directly follows' relation, because this relation is the basis for many process mining techniques, including the alpha algorithm~\cite{aalst2004} (and its variants), the heuristic miner~\cite{weijters2006}, and the fuzzy miner~\cite{guenther2007}. The directly follows relation retrieves the events that follow each other directly in some case. The SQL query that retrieves this relation from relation $R$ in Table~\ref{tab:exlog} is:
\begin{verbatim}
SELECT DISTINCT a.activity, b.activity
FROM R a, R b WHERE a.case = b.case AND 
  a.end_time < b.end_time AND
  NOT EXISTS SELECT * FROM R c WHERE c.case = a.case AND 
    a.end_time < c.end_time AND c.end_time < b.end_time;
\end{verbatim}
Another example query, is the query that returns the average waiting time before each activity:
\begin{verbatim}
SELECT b.activity, average(b.start_time - a.end_time)
FROM R a, R b WHERE a.case = b.case AND 
  a.end_time < b.end_time AND
  NOT EXISTS SELECT * FROM R c WHERE c.case = a.case AND 
    a.end_time < c.end_time AND c.end_time < b.end_time
GROUP BY b.activity;
\end{verbatim}

These queries illustrate the challenges that arise when doing process mining directly on a database:
\begin{compactenum}
\item The queries are inconvenient. Even a conceptually simple process mining request like `retrieve all directly follows relations between events' is difficult to phrase in SQL.
\item The queries are inefficient. The reason for this is that the `directly follows' relation that is at the heart of process mining requires a nested query (i.e. the `NOT EXISTS' part) and nested queries are known to cause performance problems in database, irrespective of the quality of query optimization~\cite{kim1982,chaudhuri1998}, This will be discussed in detail in Section~\ref{sec:evaluation}.
\end{compactenum}

\begin{figure}
\centering
\includegraphics[width=\textwidth]{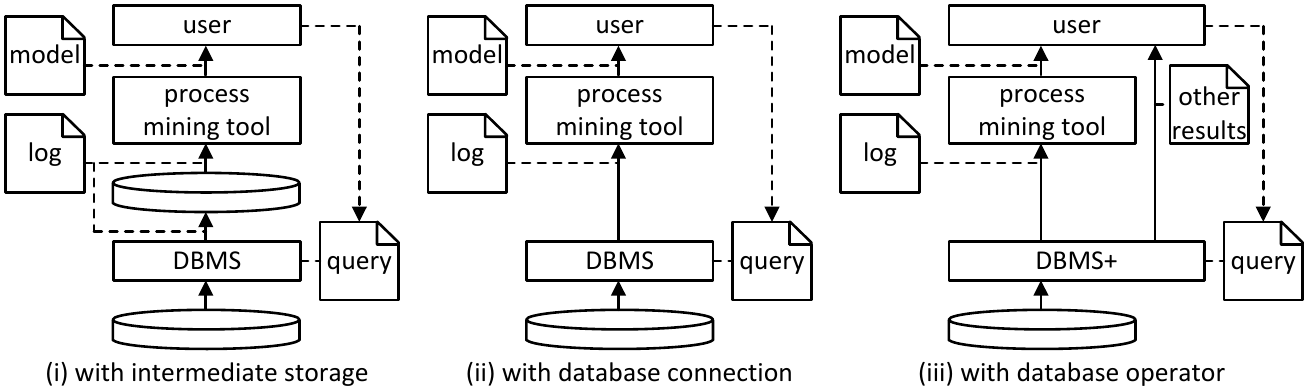}
\caption{Strategies for process mining on a database}
\label{fig:strategies}
\end{figure}

Consequently, measures must be taken to make process mining - and in particular extracting the `directly follows' relation - feasible on relational databases. Figure~\ref{fig:strategies} shows three possible strategies. Figure~\ref{fig:strategies}.i shows the current state of the art, in which a user constructs an SQL query to extract a log from the database. This log is written to disk (for example as a csv file) and then read again by a process mining tool for further processing. Consequently, the complete log must be read or written three times and there is some manual work involved in getting the log to and from disk. It is easy to imagine a process mining tool that does not need intermediate storage to disk. Such a tool would only need to read the log once and would not require manual intervention to get the log to and from disk. Figure~\ref{fig:strategies}.ii illustrates this strategy.

This paper proposes a third strategy, in which the DBMS supports a native `directly follows' operator. This strategy has the benefit that it does not require intermediate storage on disk and that it facilitates flexible and convenient querying for process mining related information. In addition, it has the benefit that it leverages proven database technology, which can efficiently handle very large data sets in real time. To realize this strategy, the paper defines the `directly follows' operator in relational algebraic form, by defining what it does, how it behaves with respect to other operators, and what its execution costs are. By doing so this paper formally defines the necessary relational algebraic elements of a `directly follows' operator, which are required for implementation of such an operator in a DBMS.

Against this background the remainder of this paper is structured as follows. Section~\ref{sec:background} explains relational algebra as background for this paper. Section~\ref{sec:algebraformining} presents a relational algebraic `directly follows' operator. Section~\ref{sec:evaluation} shows the computational cost of executing this operator and the potential effects of query optimization with respect to this operator. Finally, section~\ref{sec:relwork} presents related work and section~\ref{sec:conclusions} the conclusions.

\section{Background}\label{sec:background}

Relational algebra is at the core of every relational database system. It is used to define the execution semantics of an SQL query and to define equivalence rules to determine which execution semantics (among a set of equivalent ones) is the most efficient to execute. Before we define a relational algebraic `directly follows' operator, we provide background on these two topics.

\subsection{Relational Algebra}

In this section we briefly define the basic relational algebraic operators. We refer to the many textbooks on databases (e.g.~\cite{abiteboul1995}) for more detailed definitions.

\begin{definition}[Attribute, Schema, Relation]\label{def:rel}
An attribute is a combination of a name and a domain of possible values. A schema is a set of attributes. Any two elements $s_1, s_2$ of schema $s$ with $s_1 \neq s_2$ have different names. A relation is a combination of a schema and a set of tuples. Each tuple in a relation maps attribute names from the schema of the relation to values from the corresponding domain.
\end{definition}

For example, let $C$ be the domain of case identifiers, $E$ be the domain of activities, and $T$ be the time domain. The relation of Table~\ref{tab:exlog} has the schema $\{case: C, activity: E, start\_time: T, end\_time: T\}$ and the set of tuples $\{\{case\mapsto 1, activity\mapsto A, start\_time\mapsto 00{:}20, end\_time\mapsto 00{:}22\}, \ldots\}$.

In the remainder of this paper, we will also refer to $R$ as the set of tuples of a relation $R$. For a relation with a schema that defines an attribute with name $a$, we will use $t_a$ to refer to the value of attribute $a$ in tuple $t$.

\begin{definition}[Relational Algebra]\label{def:relalg}
Let $R, S$ be relations with schema $r, s$. Furthermore, let $a, b$ be attribute names, and $\phi$ a condition over attributes that is constructed using conjunction ($\land$), disjunction ($\lor$), negation $(\lnot$), and binary conditions ($>,\geq,=,\neq,\leq,<$) over attributes and attribute values. We define the usual relational algebra operators:
\begin{compactitem}
\item Selection: $\sigma_\phi R = \{t | t \in R, \phi(t) \}$, where $\phi(t)$ is derived from $\phi$ by replacing each attribute name $a$ by its value in tuple $t$: $t_a$. The schema of $\sigma_\phi R$ is $r$.
\item Projection: $\pi_{a,b,\ldots} R = \{\{a\mapsto t_a,b\mapsto t_b,\ldots\} | t \in R \}$. The schema of $\pi_{a,b,\ldots} R$ is $r$ restricted to the attributes with names $a, b, \ldots$.
\item Renaming: $\rho_{a/b} R = R$. The schema of $\rho_{a/b} R$ is derived from $r$ by replacing the name of attribute $a$ by $b$. In the remainder of this paper, we will also use $\rho_x R$ to represent prefixing all attributes of $R$ with $x$.
\end{compactitem}
In addition, we define the usual set theoretic operators $R \cup S$, $R \cap S$, $R - S$. These operators have the usual set-theoretic interpretation, but they require that $R$ and $S$ have the same schema. We define the Cartesian product of $R$ with schema $\{r_1: R_1, r_2: R_2, \ldots, r_n:R_n\}$ and $S$ with schema $\{s_1:S_1, s_2:S_2, \ldots, s_n:S_n\}$ as $R \times S = \{\{r_1\mapsto t_{r_1},r_2\mapsto t_{r_2},\ldots,r_n\mapsto t_{r_n},s_1\mapsto u_{s_1}, s_2\mapsto u_{s_2},\ldots,s_m\mapsto u_{s_m}\}|t \in R, u \in S\}$. The schema of $R \times S$ is $r \cup s$.

Finally, a join operator is usually defined for the commonly used operator of joining tuples from two relations that have some property in common. The join operator is a shorthand for a combination of Cartesian product and selection: $R \bowtie_\phi S = \sigma_\phi R \times S$.
\end{definition}

Table~\ref{tab:exrel} shows examples of the selection, projection, and renaming operators, applied to the relation in Table~\ref{tab:exlog}.

\begin{table}[!tb]
\centering
\caption{Example Relational Algebra Expressions}\label{tab:exrel}
\begin{tabular}{
C{0.7cm} C{1cm} C{1.7cm} C{1.7cm} p{.1cm} 
C{0.7cm} p{.1cm} 
C{0.7cm} C{1cm} C{1.7cm} C{1.7cm}
}
\multicolumn{4}{c}{$\sigma_{case=1}R$}      &  & $\pi_{case}R$ &  & \multicolumn{4}{c}{$\rho_{case/id}R$}              \\
\cmidrule{1-4}\cmidrule{6-6}\cmidrule{8-11}
case & activity & start\_time & end\_time &  & case          &  & id     & activity    & start\_time & end\_time \\
\cmidrule{1-4}\cmidrule{6-6}\cmidrule{8-11}
1    & A     & 00{:}20   & 00{:}22      &  & 1             &  & 1        & A        & 00{:}20   & 00{:}22      \\
1    & B     & 02{:}04   & 02{:}08      &  & 2             &  & 1        & B        & 02{:}04   & 02{:}08      \\
1    & E     & 02{:}32   & 02{:}32      &  & 3             &  & 1        & E        & 02{:}32   & 02{:}32      \\
\cmidrule{1-4}
     &       &             &                &  & $\ldots$      &  & $\ldots$ & $\ldots$ & $\ldots$    & $\ldots$  \\    
\cmidrule{6-6}\cmidrule{8-11}
\end{tabular}
\end{table}

\subsection{Query Optimization}\label{ssec:queryoptimization}

There are a large number of proven relational algebraic equivalences that can be used to rewrite relational algebraic equations~\cite{abiteboul1995,sagiv1978}. In the remainder of this paper, we use the following ones. Let $R, S$ be tables, $a, b, c$ be attributes, $x, y$ be attribute values, $\phi, \psi$ be conditions, and $\theta$ be a binary condition ($>,\geq,=,\neq,\leq,<$). Then:
\begin{align}
\sigma_{\phi \land \psi} R & = \sigma_{\phi} (\sigma_\psi R) \label{eqn:cascades}\\
\sigma_{\phi} (\sigma_\psi R) & = \sigma_{\psi} (\sigma_\phi R) \label{eqn:comms}\\
R \bowtie_\phi S & = S \bowtie_\phi R \label{eqn:commj}\\
(R \bowtie_\phi S) \bowtie_\psi T & = R \bowtie_\phi (S \bowtie_\psi T) \label{eqn:assj}\\
\sigma_\psi (R \bowtie_\phi S) & = (\sigma_\psi R) \bowtie_\phi S \text{, if } \psi \text{ only has attributes from } R \label{eqn:distrsj}\\
\sigma_\psi (R - S) & = (\sigma_\psi R) - (\sigma_\psi S) \label{eqn:distrsm}\\
\sigma_{a \theta x} (\rho_{b/a} R) & = \rho_{b/a} (\sigma_{b \theta x} R) \label{eqn:distrsr}\\
\pi_a (\rho_{b/a} R) & = \rho_{b/a} (\pi_b R) \label{eqn:distrpr}
\end{align}
\begin{align}
\pi_{a,b,\ldots} (\sigma_{\phi} R) & = \sigma_{\phi} (\pi_{a,b,\ldots} R) \text{, if } \phi \text{ only has attributes from } a, b, \ldots \label{eqn:distrps}\\
\pi_{a,b,\ldots} (S \bowtie_\phi R) & = (\pi_{a,\ldots} R) \bowtie_\phi (\pi_{b,\ldots} S)\text{, if } a, b, \ldots \text{can be split over } R, S \label{eqn:distrpj}\\
\pi_{a,b} ( \pi_{b,c} R) & = \pi_b R \label{eqn:intersectp}\\
\pi_{a,\ldots} ( \pi_{b,\ldots} R) & = \pi_{b,\ldots} ( \pi_{a,\ldots} R) \label{eqn:commp}\\
\rho_{b/a}(R \bowtie_{b\theta c} S) & = (\rho_{b/a}R) \bowtie_{a\theta c} S \text{, if } a,b \text{ only in } R \label{eqn:distrrj}\\
\pi_{Rs} R & = R \text{, if } Rs \text{ has only attributes from } R \label{eqn:onep}\\
\pi_{Rs} (R \bowtie_\phi S) & = R \text{, if } R \bowtie_\phi S \text{ includes each tuple of } R \nonumber \\
                            & \phantom{ = R,\ } \text{and } Rs \text{ has exactly the attributes from } R \label{eqn:zerop}\\
(R-T) \bowtie_{a \theta b} S & = R \bowtie_{a \theta b} S - T \bowtie_{a \theta b} S \label{eqn:distrjm}
\end{align}

In practice these equivalences are used to generate alternative formulas that lead to the same result, but represent alternative execution strategies. For example, $\sigma_\psi(\sigma_\phi R \times \sigma_\theta S)$ can be proven to be equivalent to $\sigma_{\psi \land \phi \land \theta} (R \times S)$. However, $\sigma_\psi(\sigma_\phi R \times \sigma_\theta S)$ represents the execution strategy in which we first execute the selections and then the Cartesian product, while $\sigma_{\psi \land \phi \land \theta} (R \times S)$ represents the execution strategy where we first execute the Cartesian product and then the selection. The first execution strategy is much more efficient than the second, because it only requires the Cartesian product to be computed for a (small) subset of $R$ and $S$.

\section{Relational Algebra for Process Mining}\label{sec:algebraformining}

This section defines the `directly follows' relation as a relational algebraic operator. It also presents and proves equivalences for this operator that can be used for query optimization, similar to the equivalences that are presented in Section~\ref{ssec:queryoptimization}.

\subsection{Directly Follows Operator}

\begin{table}[!tb]
	\centering
	\caption{Result of $>_{case, end\_time} Log$}\label{tab:exwf}
\begin{tabular}{
	C{0.8cm} C{1.3cm} C{1.7cm} C{1.7cm} C{0.8cm} C{1.3cm} C{1.7cm} C{1.7cm}}
	\cmidrule{1-8}
	$\downarrow$case & $\downarrow$activity & $\downarrow$start\_time & $\downarrow$end\_time & $\uparrow$case & $\uparrow$activity & $\uparrow$start\_time & $\uparrow$end\_time\\
	\cmidrule{1-8}
	1	&	$A$	&	 00{:}20	&	 00{:}22	&	1	&	$B$	&	02{:}04&	 02{:}08	\\
	1	&	$B$	&	 02{:}04	&	 02{:}08	&	1	&	$E$	&	02{:}32&	 02{:}32	\\
	2	&	$A$	&	 02{:}15	&	 02{:}20	&	2	&	$D$	&	03{:}14&	 03{:}19	\\
	2	&	$D$	&	 03{:}14	&	 03{:}19	&	2	&	$E$	&	05{:}06&	 05{:}07	\\
	3	&	$A$	&	 02{:}27	&	 02{:}29	&	3	&	$D$	&	04{:}17&	 04{:}20	\\
	3	&	$D$	&	 04{:}17	&	 04{:}20	&	3	&	$E$	&	06{:}51&	 06{:}53	\\
	4	&	$A$	&	 03{:}06	&	 03{:}10	&	4	&	$B$	&	05{:}04&	 05{:}09	\\
	4	&	$B$	&	 05{:}04	&	 05{:}09	&	4	&	$E$	&	07{:}26&	 07{:}29	\\
	5	&	$A$	&	 03{:}40	&	 03{:}44	&	5	&	$B$	&	05{:}59&	 06{:}06	\\
	5	&	$B$	&	 05{:}59	&	 06{:}06	&	5	&	$E$	&	07{:}49&	 07{:}52	\\
	6	&	$A$	&	 04{:}18	&	 04{:}20	&	6	&	$C$	&	07{:}08&	 07{:}12	\\
	6	&	$C$	&	 07{:}08	&	 07{:}12	&	6	&	$E$	&	09{:}05&	 09{:}07	\\
	\cmidrule{1-8}
\end{tabular}
\end{table}

The directly follows operator retrieves events that directly follow each other in some case. For a database table $Log$ that has a column $c$, which denotes the case identifier, and a column $t$, which denotes the completion timestamp of an event, we denote this operator as $>_{c,t} Log$. For example, applying the operator to the example log from Table~\ref{tab:exlog} (i.e. $>_{case, end\_time} Log$) returns Table~\ref{tab:exwf}. Similar to the way in which the join operator is defined in terms of other relational algebra operators, we define the `directly follows' operator in terms of the traditional relational algebra operators as follows.
\begin{definition}[Directly Follows Operator]\label{def:wfop}
\begin{align*}
>_{c,t} Log = &
\rho_{\downarrow} Log \bowtie_{\downarrow t < \uparrow t \land \downarrow c = \uparrow c} \rho_{\uparrow} Log \\
& - \pi_{As} ((\rho_{\downarrow} Log \bowtie_{\downarrow t < \uparrow t \land \downarrow c = \uparrow c} \rho_{\uparrow} Log) \bowtie_{\downarrow t < t \land t < \uparrow t \land \downarrow c = c} Log)
\end{align*}
where $As$ is the set of attributes that are in $\rho_{\downarrow} Log$ or $\rho_{\uparrow} Log$.
\end{definition}

The directly follows operator can both be used in an algorithm for process mining that is based on it (or on `footprints' which are derived from it~\cite{aalst2011}) and for flexible querying. Some example queries include:
\begin{compactitem}
\item   The two activities that precede a rejection: \\
        $\pi_{\uparrow\uparrow activity = reject} >_{\uparrow case, \uparrow end\_time} >_{case, end\_time} Log$
\item   The activities in which the amount of a loan is changed: \\
        $\sigma_{\uparrow amount \neq \downarrow amount} >_{\uparrow case, \uparrow end\_time} Log$
\item   The resources that ever changed the amount of a loan: \\
        $\pi_{\uparrow resource} \sigma_{\uparrow amount \neq \downarrow amount} >_{\uparrow case, \uparrow end\_time} Log$
\end{compactitem}

\subsection{Directly Follows Query Optimization}\label{ssec:optimization}

To facilitate query optimization for the directly follows operator, we must define how it behaves with respect to the other operators and prove that behavior. We present this behavior as propositions along with their proofs. In each of these propositions, we use $a, b, c, t$ as attributes (where - as convention - we use $c$ to denote the case identifier attribute and $t$ to denote the time attribute), $\theta$ as a binary operator from the set $\{>,\geq,=,\neq,\leq,<\}$, and $x$ as a value.

\setcounter{proposition}{16}

The first proposition holds for case attributes and event attributes. We define case attributes as attributes that keep the same value for all events in a case, from the moment that they get a value. We define event attributes as attributes that have a value for at most one event in each case. Consequently, we can only use this proposition for optimizing queries that involve a selection on a case or event attribute. Selections on other types of attributes (including resource attributes) cannot be optimized with this proposition.

\begin{proposition}[directly follows and selection commute]\label{eqn:selectdistrgt}\\
$
>_{c,t} \sigma_{a \theta x} Log
=
\sigma_{\downarrow a \theta x \land \uparrow a \theta x} >_{c,t} Log
$, if $a$ is a case or event attribute.
\end{proposition}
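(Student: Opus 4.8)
The plan is to unfold the definition of $>_{c,t}$ on both sides using Definition~\ref{def:wfop} and then push the selection through the joins and set difference using the equivalences of Section~\ref{ssec:queryoptimization}, most crucially \eqref{eqn:distrsj} (selection distributes over join), \eqref{eqn:distrsm} (selection distributes over set difference), \eqref{eqn:cascades}/\eqref{eqn:comms} (cascading and commuting selections), and \eqref{eqn:distrps}/\eqref{eqn:distrpj} (selection commutes with projection when the attributes match up). Writing $J(U) = \rho_{\downarrow} U \bowtie_{\downarrow t < \uparrow t \land \downarrow c = \uparrow c} \rho_{\uparrow} U$ for brevity, the left-hand side is $J(\sigma_{a \theta x} Log) - \pi_{As}(J(\sigma_{a\theta x} Log) \bowtie_{\downarrow t < t \land t < \uparrow t \land \downarrow c = c} \sigma_{a\theta x} Log)$, while the right-hand side is $\sigma_{\downarrow a \theta x \land \uparrow a \theta x}(J(Log) - \pi_{As}(J(Log) \bowtie \cdots Log))$.

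First I would establish the ``local'' lemma $J(\sigma_{a\theta x} Log) = \sigma_{\downarrow a\theta x \land \uparrow a \theta x} J(Log)$: since $\rho_{\downarrow}$ and $\rho_{\uparrow}$ are pure renamings, $\rho_{\downarrow}\sigma_{a\theta x} Log = \sigma_{\downarrow a\theta x}\rho_{\downarrow} Log$ by \eqref{eqn:distrsr}, and likewise for $\uparrow$; then \eqref{eqn:distrsj} (in both the $R$-side and, via commutativity \eqref{eqn:commj}, the $S$-side form) pulls both selections out in front of the join, and \eqref{eqn:cascades} merges them into $\sigma_{\downarrow a\theta x \land \uparrow a\theta x}$. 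Next I would handle the subtracted term. The inner expression is $(J(\sigma_{a\theta x}Log)) \bowtie_{\downarrow t < t \land t < \uparrow t \land \downarrow c = c} \sigma_{a\theta x} Log$; substituting the local lemma on the left factor and using \eqref{eqn:distrsj} to pull $\sigma_{a\theta x}$ (on the third, un-prefixed copy of $Log$) and $\sigma_{\downarrow a\theta x \land \uparrow a\theta x}$ out through the outer join, this becomes $\sigma_{\downarrow a\theta x \land \uparrow a\theta x \land a\theta x}$ applied to $J(Log) \bowtie \cdots Log$. The projection $\pi_{As}$ then commutes with $\sigma_{\downarrow a\theta x \land \uparrow a\theta x}$ by \eqref{eqn:distrps}, since $As$ contains the $\downarrow a$ and $\uparrow a$ attributes (it is exactly the attribute set of $J(Log)$) — but this is precisely where the hypothesis that $a$ is a case or event attribute does its work, discussed below. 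Finally \eqref{eqn:distrsm} distributes the outer $\sigma_{\downarrow a\theta x \land \uparrow a\theta x}$ over the set difference $J(Log) - \pi_{As}(\cdots)$, matching the right-hand side.

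The main obstacle is the bookkeeping around the ``middle'' event: the subtracted join involves a third, unprefixed copy of $Log$, and naively one obtains an extra factor $\sigma_{a\theta x}$ on that copy (i.e. $\sigma_{\downarrow a\theta x \land \uparrow a\theta x \land a\theta x}$ rather than $\sigma_{\downarrow a\theta x \land \uparrow a\theta x}$) on the left-hand side, whereas the right-hand side only carries $\sigma_{\downarrow a\theta x \land \uparrow a\theta x}$. I must argue these two are equal \emph{after} the $\pi_{As}$ projection, and this is exactly what fails for arbitrary attributes. If $a$ is a case attribute, then for events in the same case with $\downarrow t < t < \uparrow t$ the value of $a$ in the middle tuple equals its value in the $\downarrow$ tuple (all three share the case, and $a$ is constant across a case once set), so $a\theta x$ is already implied by $\downarrow a \theta x$ and the extra conjunct is redundant; if $a$ is an event attribute, it has a value for at most one event per case, so on any tuple of the three-way join where $\downarrow a$, $a$, and $\uparrow a$ are all defined the three positions coincide, and again the conjunct is redundant — and the $\pi_{As}$ projection discards the middle tuple, so nothing is lost. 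I would phrase this carefully as a sub-claim ($\pi_{As}\sigma_{\downarrow a\theta x \land \uparrow a\theta x \land a\theta x}(\cdots) = \pi_{As}\sigma_{\downarrow a\theta x \land \uparrow a\theta x}(\cdots)$ under the hypothesis on $a$), prove it by tuple-wise reasoning directly from Definition~\ref{def:rel} and the definitions of case/event attribute, and then feed it into the chain of algebraic rewrites above. The remaining steps are routine applications of the listed equivalences and I would not expand them in full.
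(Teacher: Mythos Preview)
Your proposal is correct and follows essentially the same route as the paper: unfold Definition~\ref{def:wfop}, push the selection through the renamings via \eqref{eqn:distrsr}, pull the resulting selections out of the joins via \eqref{eqn:distrsj}/\eqref{eqn:commj}/\eqref{eqn:cascades}, drop the extra unprefixed conjunct $a\theta x$ using the case/event-attribute hypothesis (the paper states this as the implication ${\downarrow}a\theta x \land {\uparrow}a\theta x \Rightarrow a\theta x$), then apply \eqref{eqn:distrps} and \eqref{eqn:distrsm} and refold the definition. Your discussion of the ``middle event'' obstacle is in fact more explicit than the paper's, which simply invokes the implication without the tuple-level justification you sketch.
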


\begin{proof}
\begin{align*}
& >_{c,t} \sigma_{a \theta x} Log\\
& = \text{ (definition~\ref{def:wfop})}\\
&\qquad \rho_{\downarrow}(\sigma_{a \theta x} Log) \bowtie_{\downarrow t < \uparrow t \land \downarrow c = \uparrow c} \rho_{\uparrow}(\sigma_{a \theta x} Log) \\
&\qquad - \pi_{As} \big(((\rho_{\downarrow} (\sigma_{a \theta x} Log) \bowtie_{\downarrow t < \uparrow t \land \downarrow c = \uparrow c} \rho_{\uparrow}(\sigma_{a \theta x} Log)) \bowtie_{\downarrow t < t \land t < \uparrow t \land \downarrow c = c} \sigma_{a \theta x} Log) \big) \\
& = \text{ (proposition~\ref{eqn:distrsr})}\\
&\qquad \sigma_{\downarrow a \theta x} (\rho_{\downarrow} Log) \bowtie_{\downarrow t < \uparrow t \land \downarrow c = \uparrow c} \sigma_{\uparrow a \theta x}(\rho_{\uparrow} Log) \\
&\qquad - \pi_{As} \big(((\sigma_{\downarrow a \theta x}(\rho_{\downarrow} Log) \bowtie_{\downarrow t < \uparrow t \land \downarrow c = \uparrow c} \sigma_{\uparrow a \theta x}(\rho_{\uparrow} Log)) \bowtie_{\downarrow t < t \land t < \uparrow t \land \downarrow c = c} \sigma_{a \theta x} Log) \big)\\
& = \text{ (proposition~\ref{eqn:cascades},~\ref{eqn:commj},~\ref{eqn:distrsj})}\\
&\qquad \sigma_{\downarrow a \theta x \land \uparrow a \theta x} (\rho_{\downarrow} Log \bowtie_{\downarrow t < \uparrow t \land \downarrow c = \uparrow c} \rho_{\uparrow} Log) \\
&\qquad - \pi_{As} \big( \sigma_{\downarrow a \theta x \land \uparrow a \theta x \land a \theta x} ((\rho_{\downarrow} Log \bowtie_{\downarrow t < \uparrow t \land \downarrow c = \uparrow c} \rho_{\uparrow} Log) \bowtie_{\downarrow t < t \land t < \uparrow t \land \downarrow c = c} Log) \big)\\
& = (\text{assume } {\downarrow}a \theta x \land {\uparrow}a \theta x \Rightarrow a \theta x)\\
&\qquad \sigma_{\downarrow a \theta x \land \uparrow a \theta x} (\rho_{\downarrow} Log \bowtie_{\downarrow t < \uparrow t \land \downarrow c = \uparrow c} \rho_{\uparrow} Log) \\
&\qquad - \pi_{As} \big( \sigma_{\downarrow a \theta x \land \uparrow a \theta x} ((\rho_{\downarrow} Log \bowtie_{\downarrow t < \uparrow t \land \downarrow c = \uparrow c} \rho_{\uparrow} Log) \bowtie_{\downarrow t < t \land t < \uparrow t \land \downarrow c = c} Log) \big) \\
& = \text{ (proposition~\ref{eqn:distrps})}\\
&\qquad \sigma_{\downarrow a \theta x \land \uparrow a \theta x} (\rho_{\downarrow} Log \bowtie_{\downarrow t < \uparrow t \land \downarrow c = \uparrow c} \rho_{\uparrow} Log) \\
&\qquad - \sigma_{\downarrow a \theta x \land \uparrow a \theta x} \big( \pi_{As} ((\rho_{\downarrow} Log \bowtie_{\downarrow t < \uparrow t \land \downarrow c = \uparrow c} \rho_{\uparrow} Log) \bowtie_{\downarrow t < t \land t < \uparrow t \land \downarrow c = c} Log) \big) \\
& = \text{ (proposition~\ref{eqn:distrsm})}\\
&\qquad \sigma_{\downarrow a \theta x \land \uparrow a \theta x} \big( \rho_{\downarrow} Log \bowtie_{\downarrow t < \uparrow t \land \downarrow c = \uparrow c} \rho_{\uparrow} Log \\
&\qquad - \pi_{As} ((\rho_{\downarrow} Log \bowtie_{\downarrow t < \uparrow t \land \downarrow c = \uparrow c} \rho_{\uparrow} Log) \bowtie_{\downarrow t < t \land t < \uparrow t \land \downarrow c = c} Log) \big) \\
& = \text{ (definition~\ref{def:wfop})} \\
& \sigma_{\downarrow a \theta x \land \uparrow a \theta x} >_{c,t} Log
\end{align*}
\end{proof}

Note that proposition~\ref{eqn:distrps} requires that the condition only contains attributes that are also projected (in this case $\downarrow a, \uparrow a$ must be in $As$). This condition is satisfied as per definition~\ref{def:wfop}. Also note that the proof uses an assumption, which states that if any two events in a case have the same value for an attribute, all events for that case that are between these two (in time) must also have that value (${\downarrow}a \theta x \land {\uparrow}a \theta x \Rightarrow a \theta x$). This assumption holds for case attributes and for event attributes, which are the scope of this proposition.

The next proposition is a variant of the previous one, in which there is a condition on two attributes instead of an attribute and a value.

\begin{proposition}[directly follows and selection commute 2]\label{eqn:selectdistrgt2}\\
$
>_{c,t} \sigma_{a \theta b} Log
=
\sigma_{\downarrow a \theta \downarrow b \land \downarrow a \theta \uparrow b \land \uparrow a \theta \downarrow b \land \uparrow a \theta \uparrow b} >_{c,t} Log
$, if $a,b$ are case or event attributes.
\end{proposition}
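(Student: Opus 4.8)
The plan is to follow the proof of Proposition~\ref{eqn:selectdistrgt} almost verbatim; the only genuine difference is that the inner selection now compares two attributes rather than an attribute and a constant.

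I would begin by expanding $>_{c,t}\sigma_{a\theta b} Log$ with Definition~\ref{def:wfop}, which places a copy of $\sigma_{a\theta b}$ on each occurrence of $Log$: one on each operand of the pairwise self-join and one on the middle-event copy inside the projected triple self-join. The next step is to move these selections past the prefix-renamings. Since $\rho_\downarrow$ (respectively $\rho_\uparrow$) renames \emph{both} $a$ and $b$, this requires the two-attribute analogue of equivalence~\ref{eqn:distrsr}, namely $\rho_\downarrow(\sigma_{a\theta b} Log) = \sigma_{\downarrow a\theta\downarrow b}(\rho_\downarrow Log)$ and likewise for $\uparrow$; it is justified by exactly the argument behind~\ref{eqn:distrsr}, since a renaming leaves the underlying tuple set unchanged. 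The selection on the middle-event copy of $Log$ is left untouched and stays $\sigma_{a\theta b} Log$.

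As in Proposition~\ref{eqn:selectdistrgt}, I would then use~\ref{eqn:cascades},~\ref{eqn:commj} and~\ref{eqn:distrsj} to lift all selections to the outside of each join, producing $\sigma_{\downarrow a\theta\downarrow b \land \uparrow a\theta\uparrow b}$ in front of the pairwise join and $\sigma_{\downarrow a\theta\downarrow b \land \uparrow a\theta\uparrow b \land a\theta b}$ in front of the triple join. The key step — the counterpart of the assumption invoked in Proposition~\ref{eqn:selectdistrgt} — is that for case or event attributes $a,b$, whenever the $\downarrow$-event and the $\uparrow$-event of the same case are joined, the four conditions $\downarrow a\theta\downarrow b$, $\downarrow a\theta\uparrow b$, $\uparrow a\theta\downarrow b$ and $\uparrow a\theta\uparrow b$ are all equivalent, and whenever a third same-case event is additionally present (as in the triple join) they are moreover equivalent to $a\theta b$ evaluated on that third event. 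For two case attributes this is immediate, because a case attribute is constant throughout a case once assigned, so all the $a$-values (and all the $b$-values) appearing in such a joined tuple coincide. For an event attribute, or for a mixed case/event pair, it holds vacuously: an event attribute carries a value for at most one event per case, so two distinct same-case events cannot both satisfy their half of a comparison involving it, and consequently both $\sigma_{\downarrow a\theta\downarrow b \land \uparrow a\theta\uparrow b}$ and the four-conjunct selection $\sigma_{\downarrow a\theta\downarrow b \land \downarrow a\theta\uparrow b \land \uparrow a\theta\downarrow b \land \uparrow a\theta\uparrow b}$ return the empty relation on both joins. Using this equivalence I would replace the selection in front of the pairwise join by the four-conjunct selection, and the selection in front of the triple join by the same four conjuncts, dropping the now-redundant $a\theta b$.

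From here the endgame is identical to Proposition~\ref{eqn:selectdistrgt}: by~\ref{eqn:distrps} the four-conjunct selection commutes with $\pi_{As}$ (it mentions only $\downarrow a$, $\downarrow b$, $\uparrow a$ and $\uparrow b$, which all lie in $As$ by Definition~\ref{def:wfop}); by~\ref{eqn:distrsm} the common outer selection distributes over the set difference; and folding the result back with Definition~\ref{def:wfop} leaves exactly $\sigma_{\downarrow a\theta\downarrow b \land \downarrow a\theta\uparrow b \land \uparrow a\theta\downarrow b \land \uparrow a\theta\uparrow b} >_{c,t} Log$. I expect the only real obstacle to be the justification of the cross-conjunct equivalence above — in particular, making sure that the vacuous argument genuinely covers mixed case/event pairs and that the treatment of undefined attribute values stays consistent with the (tacit) treatment in Proposition~\ref{eqn:selectdistrgt}. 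Everything else is a mechanical replay of that proof.
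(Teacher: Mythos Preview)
Your proposal is correct and matches the paper's approach exactly: the paper's entire proof of this proposition is the single sentence ``Analogous to the proof of proposition~\ref{eqn:selectdistrgt}.'' Your plan spells out that analogy in full, including the one place (the cross-conjunct equivalence and the handling of undefined values for event attributes) where the analogy is not purely mechanical --- a point the paper leaves implicit.
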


\begin{proof}
Analogous to the proof of proposition~\ref{eqn:selectdistrgt}
\end{proof}

To prove that directly follows and projection commute, we first need to prove that projection and set minus commute, because the set minus operator is an important part of the definition of the directly follows operator. However, for the general case it is not true that projection and set minus commute. A counter example is easily constructed. Let $R = \{\{a\mapsto 1,b\mapsto 2\}\}$ and $S = \{\{a\mapsto1,b\mapsto3\}\}$. For these relations it does not hold that $\pi_{a} (R - S) = (\pi_{a} R) - (\pi_{a} S)$. However, we can prove this proposition for the special case that $S$ is a subset of $R$ and $a$ uniquely identifies tuples in $R$. Since these conditions are satisfied for the directly follows operator, it is sufficient to prove the proposition under these conditions.

\begin{proposition}[projection and restricted set minus commute]\label{eqn:distrpm}\\
$\pi_{a} (R - S) = (\pi_{a} R) - (\pi_{a} S)$, 
if $S \subseteq R$ and $a$ uniquely identifies each tuple in $R$.
\end{proposition}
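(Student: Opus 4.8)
The plan is to prove the two inclusions $\pi_a(R-S) \subseteq (\pi_a R)-(\pi_a S)$ and $(\pi_a R)-(\pi_a S) \subseteq \pi_a(R-S)$ separately, using the two hypotheses ($S \subseteq R$ and that $a$ is a key for $R$) only where needed. Throughout, I would work directly with the set-theoretic definition of projection from Definition~\ref{def:relalg}, namely $\pi_a R = \{\{a \mapsto t_a\} \mid t \in R\}$, and I would write a generic element of such a set as $\{a \mapsto v\}$ for a value $v$ in the domain of $a$.

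For the first inclusion, take any $\{a \mapsto v\} \in \pi_a(R-S)$. Then there is a tuple $t \in R-S$, i.e. $t \in R$ and $t \notin S$, with $t_a = v$. Clearly $\{a \mapsto v\} \in \pi_a R$. It remains to show $\{a \mapsto v\} \notin \pi_a S$; suppose for contradiction that it is, so there exists $u \in S$ with $u_a = v = t_a$. Since $S \subseteq R$, we have $u \in R$, and now both $t$ and $u$ lie in $R$ and agree on attribute $a$. Because $a$ uniquely identifies each tuple in $R$, this forces $t = u$. But then $t = u \in S$, contradicting $t \notin S$. Hence $\{a \mapsto v\} \in (\pi_a R)-(\pi_a S)$. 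Note this direction is exactly where both hypotheses are used — the counterexample in the text fails precisely because $a$ is not a key there.

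For the converse inclusion, take any $\{a \mapsto v\} \in (\pi_a R)-(\pi_a S)$. Then $\{a \mapsto v\} \in \pi_a R$, so there is $t \in R$ with $t_a = v$, and $\{a \mapsto v\} \notin \pi_a S$. I claim $t \notin S$: if $t \in S$ then $\{a \mapsto t_a\} = \{a \mapsto v\} \in \pi_a S$, a contradiction. Hence $t \in R-S$, so $\{a \mapsto v\} = \{a \mapsto t_a\} \in \pi_a(R-S)$. (This direction actually needs neither hypothesis, but that is fine — it is the trivial half.)

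The only real subtlety — and the step I would flag as the crux — is the first inclusion's use of the key assumption to collapse $t$ and $u$ into a single tuple; everything else is bookkeeping with the set-builder definitions. I would close by remarking that, as noted just before the proposition, the directly-follows operator of Definition~\ref{def:wfop} applies this lemma with $R$ the join $\rho_\downarrow Log \bowtie \rho_\uparrow Log$, $S$ the three-way join that subtracts the non-direct pairs (which is manifestly a subset of $R$, being $R$ joined further and then projected back onto $As = $ the attributes of $R$, cf.\ equivalence~\ref{eqn:zerop}), and $a$ the full attribute set $As$, which indeed identifies tuples of $R$ since each such tuple already carries all of $As$; so the hypotheses are met in the intended application.
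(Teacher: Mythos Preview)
Your proof is correct and takes a genuinely different route from the paper's. The paper argues conceptually: it introduces the inclusion map $f:S\to R$, observes that because $a$ is a key the map is ``completely determined by the values of tuples in $a$'' and hence unaffected by projecting onto $a$, and then compares the two sides in terms of the range of $f$. You instead chase elements directly and prove the two inclusions separately from the set-builder definition. Your approach is more elementary and more precise about where each hypothesis is consumed: the inclusion $(\pi_a R)-(\pi_a S)\subseteq\pi_a(R-S)$ holds unconditionally, and only the reverse inclusion requires both $S\subseteq R$ and the key property. The paper's argument is terser and conveys the intuition, but is looser (e.g.\ it calls $f$ ``non-surjective'', which is neither guaranteed nor needed). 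One minor inaccuracy in your closing remark: taking $a$ to be the full attribute set $As$ makes $\pi_a$ the identity and the lemma trivial; the actual invocation, in the proof of Proposition~\ref{eqn:projectdistrgt}, projects onto the strictly smaller set $\{\downarrow c,\downarrow t,\downarrow a,\uparrow c,\uparrow t,\uparrow a\}$, which is a key for the join under the implicit assumption that $(c,t)$ identifies events in $Log$.
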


\begin{proof}
This equivalence is proven by observing that $S \subseteq R$ implies that a non-surjective injective function $f:S\rightarrow R$ exists that matches each tuple $s$ in $S$ to a unique tuple $r$ in $R$. The fact that $a$ uniquely identifies tuples in $R$ (and also in $S$, because $S$ is a subset of $R$) implies that $f$ is completely determined by the values of tuples in $a$, i.e., the values of attributes other than $a$ have no consequence for $f$. Therefore, projecting $R$ and $S$ onto $a$ does not change the tuple mapping defined by $f$.

Now, looking at the left side of proposition~\ref{eqn:distrpm}, calculating the projection over the difference, means removing the attributes not in $a$ from the selected tuples in $R$ that are not in the range of $f$. Looking at the right side, calculating the difference over the projections, means removing the attributes not in a from $R$ and $S$ (which does not affect $f$) and then selecting the tuples in $R$ that are not in the range of $f$. These two are equivalent.
\end{proof}

\begin{proposition}[directly follows and restricted projection commute]\label{eqn:projectdistrgt}\\
$
>_{c,t} \pi_{c,t,a} Log
=
\pi_{\downarrow c,\downarrow t,\downarrow a, \uparrow c, \uparrow t, \uparrow a} >_{c,t} Log
$
\end{proposition}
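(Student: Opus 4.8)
The proof will mirror the structure of the proof of Proposition~\ref{eqn:selectdistrgt}: unfold both sides using Definition~\ref{def:wfop}, push the projection inward through the joins and the set-minus on the left-hand side using the equivalences from Section~\ref{ssec:queryoptimization}, and recognize the result as the right-hand side re-folded via Definition~\ref{def:wfop}. The fresh ingredient compared with the selection case is Proposition~\ref{eqn:distrpm} (projection and restricted set-minus commute), which is exactly why that lemma was proved first; I would use it at the step where the outer projection must be moved across the ``$-$'' in the definition of $>_{c,t}$.

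\textbf{Key steps, in order.} First, I would expand the left-hand side $>_{c,t} \pi_{c,t,a} Log$ by Definition~\ref{def:wfop}, so that $\rho_{\downarrow}$ and $\rho_{\uparrow}$ are applied to $\pi_{c,t,a} Log$ and the inner (three-way) join also refers to $\pi_{c,t,a} Log$. Second, I would use the renaming/projection interaction~\eqref{eqn:distrpr} to rewrite $\rho_{\downarrow}(\pi_{c,t,a} Log)$ as $\pi_{\downarrow c,\downarrow t,\downarrow a}(\rho_{\downarrow} Log)$ and similarly for $\uparrow$; the join conditions only mention $c$ and $t$ (renamed), so they survive. Third, I would pull the two inner projections out of the first join using~\eqref{eqn:distrpj}, obtaining $\pi_{\downarrow c,\downarrow t,\downarrow a,\uparrow c,\uparrow t,\uparrow a}(\rho_{\downarrow} Log \bowtie \rho_{\uparrow} Log)$, and do the analogous pull-out on the second (three-way) join so that the projection on $Log$ factors out — here I need that $\{c,t,a\}$ can be split appropriately across the three operands, which holds because $a$ appears (prefixed or unprefixed) in each of $\rho_{\downarrow} Log$, $\rho_{\uparrow} Log$, and $Log$. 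Fourth, inside the subtracted term I would collapse the nested projections using~\eqref{eqn:intersectp}/\eqref{eqn:commp} so that the whole subtracted term becomes $\pi_{\downarrow c,\downarrow t,\downarrow a,\uparrow c,\uparrow t,\uparrow a}$ applied to $\pi_{As}\big((\rho_{\downarrow} Log \bowtie \rho_{\uparrow} Log)\bowtie Log\big)$, i.e. the exact second term of $>_{c,t} Log$ wrapped in the target projection. Fifth — the crucial move — I would apply Proposition~\ref{eqn:distrpm} to combine the two $\pi_{\downarrow c,\downarrow t,\downarrow a,\uparrow c,\uparrow t,\uparrow a}$-wrapped terms into $\pi_{\downarrow c,\downarrow t,\downarrow a,\uparrow c,\uparrow t,\uparrow a}$ applied to the difference, which is precisely $\pi_{\downarrow c,\downarrow t,\downarrow a,\uparrow c,\uparrow t,\uparrow a} >_{c,t} Log$ by Definition~\ref{def:wfop}.

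\textbf{Main obstacle.} The delicate point is discharging the hypotheses of Proposition~\ref{eqn:distrpm} at the fifth step: I must argue that the subtracted relation is a subset of $\rho_{\downarrow} Log \bowtie \rho_{\uparrow} Log$ (clear from the definition, since it is built by further joining and then projecting back onto exactly the attributes $As$), and that the attribute list $\{\downarrow c,\downarrow t,\downarrow a,\uparrow c,\uparrow t,\uparrow a\}$ uniquely identifies tuples of $\rho_{\downarrow} Log \bowtie \rho_{\uparrow} Log$. The latter is the real content: it requires that $(c,t)$ — or more precisely $(c,t,a)$ — is a key of $Log$, so that a pair of events is determined by the projected columns. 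This is implicitly the same key assumption underlying Definition~\ref{def:wfop} (the directly-follows relation presupposes that an event is identified by its case and completion time), so I would state it as a standing assumption on $Log$. The remaining bookkeeping — checking that every join condition only references attributes that are retained at each stage, so that~\eqref{eqn:distrpj} and~\eqref{eqn:distrps}-type moves are legal — is routine but must be done carefully for the three-way join, since its condition mentions $\downarrow c, c, \uparrow c$ and $\downarrow t, t, \uparrow t$ simultaneously and all of these must be present in the projection lists chosen at each step.
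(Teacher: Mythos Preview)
Your proposal is correct and follows essentially the same route as the paper's own proof: expand via Definition~\ref{def:wfop}, push projections through renaming with~\eqref{eqn:distrpr}, through the joins with~\eqref{eqn:distrpj}, collapse the nested projections with~\eqref{eqn:intersectp} and~\eqref{eqn:commp}, then apply Proposition~\ref{eqn:distrpm} to move the projection across the set-minus before re-folding. Your explicit discussion of the hypotheses needed for Proposition~\ref{eqn:distrpm} (subset condition and the key assumption on $(c,t)$) is in fact more careful than the paper, which applies that step without spelling out the verification.
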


\begin{proof}
\begin{align*}
& >_{c,t} \pi_{c,t,a} Log\\
& = \text{ (definition~\ref{def:wfop})}\\
&\qquad \rho_{\downarrow}(\pi_{c,t,a} Log) \bowtie_{\downarrow t < \uparrow t \land \downarrow c = \uparrow c} \rho_{\uparrow}(\pi_{c,t,a} Log) \\
&\qquad - \pi_{As} ((\rho_{\downarrow} (\pi_{c,t,a} Log) \bowtie_{\downarrow t < \uparrow t \land \downarrow c = \uparrow c} \rho_{\uparrow}(\pi_{c,t,a} Log)) \bowtie_{\downarrow t < t \land t < \uparrow t \land \downarrow c = c} \pi_{c,t,a} Log)\\
& = \text{ (proposition~\ref{eqn:distrpr})}\\
&\qquad \pi_{\downarrow c,\downarrow t,\downarrow a} (\rho_{\downarrow} Log) \bowtie_{\downarrow t < \uparrow t \land \downarrow c = \uparrow c} \pi_{\uparrow c,\uparrow t,\uparrow a} (\rho_{\uparrow} Log) \\
&\qquad - \pi_{As} ((\pi_{\downarrow c,\downarrow t,\downarrow a} (\rho_{\downarrow} Log) \bowtie_{\downarrow t < \uparrow t \land \downarrow c = \uparrow c} \pi_{\uparrow c,\uparrow t,\uparrow a} (\rho_{\uparrow} Log)) \bowtie_{\downarrow t < t \land t < \uparrow t \land \downarrow c = c} \pi_{c,t,a} Log)\\
& = \text{ (proposition~\ref{eqn:distrpj})}\\
&\qquad \pi_{\downarrow c,\downarrow t,\downarrow a,\uparrow c,\uparrow t,\uparrow a} (\rho_{\downarrow} Log \bowtie_{\downarrow t < \uparrow t \land \downarrow c = \uparrow c} \rho_{\uparrow} Log) \\
&\qquad - \pi_{As} \big(\pi_{\downarrow c,\downarrow t,\downarrow a,\uparrow c,\uparrow t,\uparrow a,c,t,a}((\rho_{\downarrow} Log \bowtie_{\downarrow t < \uparrow t \land \downarrow c = \uparrow c} \rho_{\uparrow} Log) \bowtie_{\downarrow t < t \land t < \uparrow t \land \downarrow c = c} Log) \big) \\
& = \text{ (proposition~\ref{eqn:intersectp},~\ref{eqn:commp})}\\
&\qquad \pi_{\downarrow c,\downarrow t,\downarrow a,\uparrow c,\uparrow t,\uparrow a} (\rho_{\downarrow} Log \bowtie_{\downarrow t < \uparrow t \land \downarrow c = \uparrow c} \rho_{\uparrow} Log) \\
&\qquad - \pi_{\downarrow c,\downarrow t,\downarrow a,\uparrow c,\uparrow t,\uparrow a} \big( \pi_{As} ((\rho_{\downarrow} Log \bowtie_{\downarrow t < \uparrow t \land \downarrow c = \uparrow c} \rho_{\uparrow} Log) \bowtie_{\downarrow t < t \land t < \uparrow t \land \downarrow c = c} Log) \big) \\
\end{align*}
\begin{align*}
& = \text{ (proposition~\ref{eqn:distrpm})}\\
&\qquad \pi_{\downarrow c,\downarrow t,\downarrow a,\uparrow c,\uparrow t,\uparrow a} \big( (\rho_{\downarrow} Log \bowtie_{\downarrow t < \uparrow t \land \downarrow c = \uparrow c} \rho_{\uparrow} Log) \\
&\qquad - \pi_{As} ((\rho_{\downarrow} Log \bowtie_{\downarrow t < \uparrow t \land \downarrow c = \uparrow c} \rho_{\uparrow} Log) \bowtie_{\downarrow t < t \land t < \uparrow t \land \downarrow c = c} Log) \big)\\
& = \text{ (definition~\ref{def:wfop})}\\
& \pi_{\downarrow c,\downarrow t,\downarrow a, \uparrow c, \uparrow t, \uparrow a} >_{c,t} Log
\end{align*}
\end{proof}

The next proposition, which states that the directly follows relation and the theta join commute, makes it explicit that the directly follows relation duplicates all attributes of a log event. Table~\ref{tab:exwf} illustrates this. However, if the case, activity and start time attribute uniquely identify an event, then there is no need to duplicate the end time attribute or any other attribute. Nonetheless, the directly follows operator adds all attributes both on the $\uparrow$ and on the $\downarrow$ side of the table. This redundancy can easily be fixed later on with a project operator and in future work additional efficiency may be achieved by avoiding this redundancy altogether.

\begin{proposition}[directly follows and theta join commute]\label{eqn:joindistrgt}\\
$
>_{c,t} (R \bowtie_{a \theta b} S)
=
(>_{c,t} R) \bowtie_{\downarrow a \theta b} S \bowtie_{\uparrow a \theta b} S
$, if each tuple from $R$ is combined with a tuple in $S$.
\end{proposition}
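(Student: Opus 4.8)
The plan is to follow the pattern of the proofs of Propositions~\ref{eqn:selectdistrgt}--\ref{eqn:projectdistrgt}: expand both sides with Definition~\ref{def:wfop}, push the renaming and projection operators through the theta-joins with the equivalences of Section~\ref{ssec:queryoptimization}, and recover the definition of $>_{c,t}$ on the right. Two preliminaries are needed. First, since $>_{c,t}(R\bowtie_{a\theta b}S)$ carries a $\downarrow$-copy and an $\uparrow$-copy of \emph{every} attribute of the combined schema, I would read the two occurrences of $S$ on the right-hand side as the prefixed copies $\rho_{\downarrow}S$ and $\rho_{\uparrow}S$, with join conditions $\downarrow a\,\theta\,\downarrow b$ and $\uparrow a\,\theta\,\uparrow b$ (and I would note that $c$ and $t$ must be attributes of $R$, since $>_{c,t}R$ occurs on the right). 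Second, I would record the prefixing analogue of~\eqref{eqn:distrrj}, namely $\rho_x(R\bowtie_\phi S)=(\rho_x R)\bowtie_{\phi'}(\rho_x S)$ with $\phi'$ obtained from $\phi$ by prefixing every attribute name with $x$; this is immediate, since $\rho_x$ only relabels the schema and leaves the tuple set unchanged.

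Write $Log:=R\bowtie_{a\theta b}S$ and expand the left-hand side by Definition~\ref{def:wfop} into its minuend $\rho_{\downarrow}Log\bowtie_{\downarrow t<\uparrow t\land\downarrow c=\uparrow c}\rho_{\uparrow}Log$ and its subtrahend $\pi_{As}(\cdots)$. For the minuend I would push the two renamings inside with the prefixed form of~\eqref{eqn:distrrj}, and then, using~\eqref{eqn:commj} and~\eqref{eqn:assj}, re-nest the resulting four-way join so that the directly-follows condition is applied directly between $\rho_{\downarrow}R$ and $\rho_{\uparrow}R$ --- this re-nesting is sound because the theta-joins involved have conditions over disjoint pairs of (relabeled) operands --- obtaining $(\rho_{\downarrow}R\bowtie_{\downarrow t<\uparrow t\land\downarrow c=\uparrow c}\rho_{\uparrow}R)\bowtie_{\downarrow a\theta\downarrow b}\rho_{\downarrow}S\bowtie_{\uparrow a\theta\uparrow b}\rho_{\uparrow}S$, i.e. the minuend of $>_{c,t}R$ joined with the two $S$-copies.

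The subtrahend is the delicate part: there $Log$ occurs three times, so $R\bowtie_{a\theta b}S$ does too and produces one \emph{spurious} plain copy of $S$. The steps I would take are: (i) push the two renamings inside as above; (ii) use associativity to move the spurious $S$ to the outside, which is allowed because the last join condition $\downarrow t<t\land t<\uparrow t\land\downarrow c=c$ mentions only $\downarrow$-, $\uparrow$- and plain copies of attributes of $R$, never attributes of $S$; (iii) \emph{absorb} the spurious $S$: by the hypothesis that every tuple of $R$ combines with some tuple of $S$, the join with this plain $S$ preserves every tuple of the relation it is applied to, while the outer projection $\pi_{As}$ discards all plain attributes of both $R$ and $S$, so by~\eqref{eqn:zerop} together with the projection-composition rules~\eqref{eqn:intersectp},~\eqref{eqn:commp} the plain $S$ drops out entirely; (iv) re-nest the remaining joins and split the outer projection over them with~\eqref{eqn:distrpj}, pushing the projection on the $R$-part down to exactly the attribute set that Definition~\ref{def:wfop} prescribes for $>_{c,t}R$ and leaving $\rho_{\downarrow}S$, $\rho_{\uparrow}S$ intact. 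This turns the subtrahend into the subtrahend of $>_{c,t}R$, again joined with the two $S$-copies.

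At this point both terms have the common shape $(\,\cdot\,)\bowtie_{\downarrow a\theta\downarrow b}\rho_{\downarrow}S\bowtie_{\uparrow a\theta\uparrow b}\rho_{\uparrow}S$ --- the first with the minuend of $>_{c,t}R$ in the hole, the second with its subtrahend --- so I would apply~\eqref{eqn:distrjm} from right to left twice to pull the two $S$-joins out of the set difference; the bracketed factor that remains is precisely the set difference defining $>_{c,t}R$, so the whole expression collapses to $(>_{c,t}R)\bowtie_{\downarrow a\theta\downarrow b}\rho_{\downarrow}S\bowtie_{\uparrow a\theta\uparrow b}\rho_{\uparrow}S$, the right-hand side of the proposition. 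I expect the main obstacle to be step (iii): discarding the spurious plain copy of $S$ from the subtrahend is exactly where the hypothesis ``each tuple from $R$ is combined with a tuple in $S$'' is consumed, and it is not a single off-the-shelf equivalence but a small combination of~\eqref{eqn:zerop} with projection composition. A secondary, purely bookkeeping difficulty is keeping track of which relabeled copy of $R$ or $S$ each attribute lives in, since this is what validates the side conditions on every re-nesting and projection-splitting step.
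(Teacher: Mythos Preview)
Your proof follows the paper's argument step for step: expand by Definition~\ref{def:wfop}, push the renamings through the joins with~\eqref{eqn:distrrj}, reassociate via~\eqref{eqn:assj} and use~\eqref{eqn:zerop} (together with the projection rules~\eqref{eqn:distrpj},~\eqref{eqn:onep}) to absorb the spurious unprefixed copy of $S$ --- this is exactly where the paper also spends the hypothesis --- and then pull the two remaining $S$-joins out of the set difference with~\eqref{eqn:distrjm}. The only deviation is cosmetic: you read the right-hand side with prefixed copies $\rho_{\downarrow}S,\rho_{\uparrow}S$ and join conditions $\downarrow a\,\theta\,\downarrow b$, $\uparrow a\,\theta\,\uparrow b$ (via a two-sided prefixing analogue of~\eqref{eqn:distrrj}), whereas the paper applies~\eqref{eqn:distrrj} as stated and leaves both $S$-copies unprefixed; your reading is arguably the more careful one with respect to schema consistency, but the proof structure is identical.
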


\begin{proof}
\begin{align*}
& >_{c,t} (R \bowtie_{a \theta b} S)\\
& = \text{ (definition~\ref{def:wfop})}\\
&\qquad \rho_{\downarrow}(R \bowtie_{a \theta b} S) \bowtie_{\downarrow t < \uparrow t \land \downarrow c = \uparrow c} \rho_{\uparrow}(R \bowtie_{a \theta b} S) \\
&\qquad - \pi_{As} ((\rho_{\downarrow} (R \bowtie_{a \theta b} S) \bowtie_{\downarrow t < \uparrow t \land \downarrow c = \uparrow c} \rho_{\uparrow}(R \bowtie_{a \theta b} S)) \bowtie_{\downarrow t < t \land t < \uparrow t \land \downarrow c = c} (R \bowtie_{a \theta b} S))\\
& = \text{ (proposition~\ref{eqn:distrrj})}\\
&\qquad (\rho_{\downarrow}R) \bowtie_{\downarrow a \theta b} S \bowtie_{\downarrow t < \uparrow t \land \downarrow c = \uparrow c} (\rho_{\uparrow}R) \bowtie_{\uparrow a \theta b} S \\
&\qquad - \pi_{As} (((\rho_{\downarrow}R) \bowtie_{\downarrow a \theta b} S) \bowtie_{\downarrow t < \uparrow t \land \downarrow c = \uparrow c} (\rho_{\uparrow}R) \bowtie_{\uparrow a \theta b} S) \bowtie_{\downarrow t < t \land t < \uparrow t \land \downarrow c = c} R \bowtie_{a \theta b} S)\\
& = \text{ (proposition~\ref{eqn:assj},~\ref{eqn:distrpj},~\ref{eqn:onep},~\ref{eqn:zerop})}\\
&\qquad \rho_{\downarrow}R \bowtie_{\downarrow t < \uparrow t \land \downarrow c = \uparrow c} \rho_{\uparrow}R \bowtie_{\downarrow a \theta b} S \bowtie_{\uparrow a \theta b} S \\
&\qquad - \pi_{As} (\rho_{\downarrow}R \bowtie_{\downarrow t < \uparrow t \land \downarrow c = \uparrow c} \rho_{\uparrow}R \bowtie_{\downarrow t < t \land t < \uparrow t \land \downarrow c = c} R) \bowtie_{\downarrow a \theta b} S \bowtie_{\uparrow a \theta b} S\\
& = \text{ (proposition~\ref{eqn:distrjm})}\\
&\qquad \big(\rho_{\downarrow}R \bowtie_{\downarrow t < \uparrow t \land \downarrow c = \uparrow c} \rho_{\uparrow}R\\
&\qquad - \pi_{As} (\rho_{\downarrow}R \bowtie_{\downarrow t < \uparrow t \land \downarrow c = \uparrow c} \rho_{\uparrow}R \bowtie_{\downarrow t < t \land t < \uparrow t \land \downarrow c = c} R)\big) \bowtie_{\downarrow a \theta b} S \bowtie_{\uparrow a \theta b} S\\
& = \text{ (proposition~\ref{eqn:distrjm})}\\
& (>_{c,t} R) \bowtie_{\downarrow a \theta b} S \bowtie_{\uparrow a \theta b} S
\end{align*}
\end{proof}

\section{Execution cost}\label{sec:evaluation}

We determine the computational cost of executing the directly follows operation, either as part of a process mining tool or as an operation that is executed directly on the database. We also determine the effect of query optimization on the directly follows operator.

\subsection{Cost of computing the directly follows relation}\label{ssec:costfunction}

The execution cost of a database operation is typically defined in terms of the number of disk blocks read or written, because reading from or writing to disk are the most expensive database operations. In line with the strategies for process mining on a database that are presented in Figure~\ref{fig:strategies}, Table~\ref{tab:executioncost} shows four execution strategies with their costs. (Note that the `with database operator' strategy from Figure~\ref{fig:strategies} is split up into two alternatives.) The cost is presented as an order of magnitude, measured in terms of the number of disk blocks $B$ that must be read or written. The number of disk blocks is linear with the number of events in the log and depends on the number of bytes needed to store an event and the number of bytes per disk block. These measures assume that the complete log fits into memory.

\begin{table}[!tb]
	\centering
	\caption{Execution costs of process mining on a database}\label{tab:executioncost}
	\begin{tabular}{l C{2mm} l}
		\cmidrule{1-3}
		execution strategy	&&	order of costs (disk blocks)\\
		\cmidrule{1-3}
		\textbf{classical process mining}	&&\\
		  \ with intermediate storage				&&	$3 \cdot B$\\
		  \ with database connection			&&	$B$\\
		\textbf{with database operator}	&&\\
		  \ native operator				&&	$B$\\
		  \ composite operator			&&	$B$ up to $B^3$\\
		\cmidrule{1-3}
	\end{tabular}
\end{table}

Process mining with intermediate storage requires that the log is read and written three times: once to query the database for the log, once to store the log to disk, and once to load the log in the process mining tool. Consequently, the complexity is $3 \cdot B$. Process mining directly on a database requires that the log is read only once. Subsequent processing can be done in memory.

In many usage scenarios more flexible querying capabilities are needed, which can benefit from access to all SQL operators. For such usage scenarios, the `directly follows' relation must be extracted directly from the database. It is easy to imagine how a native `directly follows' operator would work. Such an operator would read the log, then sort it by case identifier and timestamp, and finally return each pair of subsequent rows that have the same case identifier. Such an operator would have to read the log from disk only once and consequently has linear cost. For databases that do not have a native `directly follows' operator, the operator can be emulated using the composite formula from definition~\ref{def:wfop}. The drawback of this formula is that it requires that the intermediate results from both sides of the minus operator are stored, after which the minus operator can be applied. While this is no problem as long as the intermediate results fit into memory, the costs become prohibitive once the intermediate results must be written to disk.

On a practical level, this problem manifested itself, for example, for the log of the BPI 2011 challenge~\cite{dongen2011} on our desktop with an i5 processor, 8GB of internal memory and an SSD drive, using MySQL and the standard MySQL buffer size. Each attempt to perform a database query that involved a composite `directly follows' relation, needed at least 10 minutes to execute, which is prohibitive for interactive exploration of an event log.

\begin{figure}[!tb]
\centering
\includegraphics[scale=0.6]{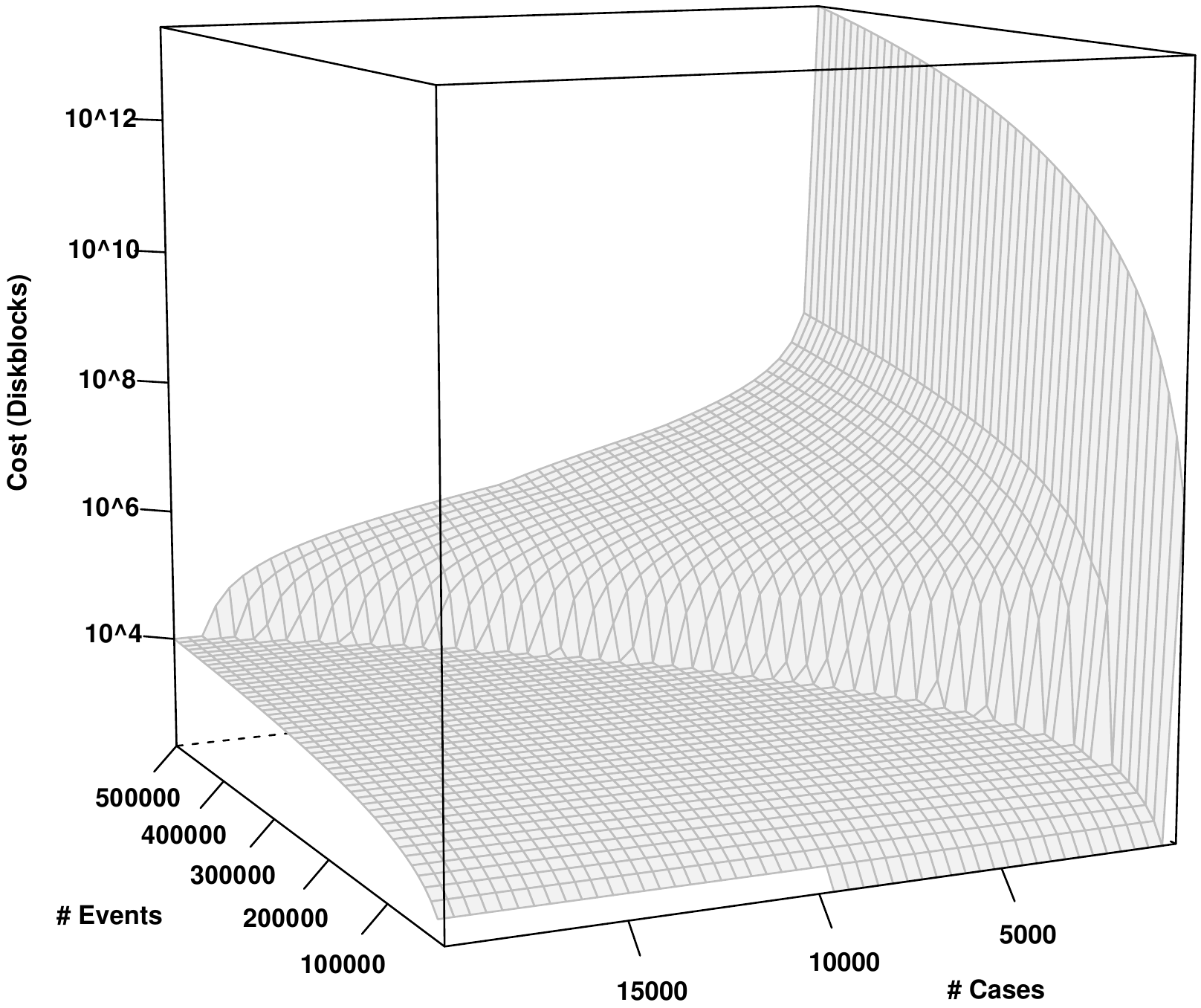}
\caption{Execution costs of process mining using the composite operator}
\label{fig:compositecost}
\end{figure}

On a theoretical level, the problem is illustrated in Figure~\ref{fig:compositecost}. This figure shows that the problem arises when the number of events in the log is high, relative to the number of cases. The mechanism that causes this is easy to derive from definition~\ref{def:wfop}, which shows that the intermediate results that must be stored are the pairs of events that directly or indirectly follow each other in some case (the left side and right side of the minus operator). Consequently, if there are many events per case, this number is high (cubic over the number of events per case in the right-hand side of the minus operator).

The precise calculation can be performed as follows. Let $V$ be the number of cases in the log, $N$ be the number of events, $F$ be the block size (i.e. the number of tuples/events that fit in a single disk block), $B_{Log} = \frac{N}{F}$ the number of disk blocks required to store the log, and $M$ be the total memory size in blocks. Note that the cost of a block nested join (or minus) operator on two relations $R$ and $S$ that take $B_R$ and $B_S$ disk blocks (with $B_R \leq B_S$), is equal to $B_R + B_S$ when one of the two relations fits in memory, and equal to $B_R + \frac{B_S}{M} \cdot B_R$ otherwise~\cite{blasgen1977}. The cost is split up into five components:
\begin{compactenum}
\item The cost of the first join is denoted as $B_{join_1}$. This equals $B_{Log}$ if the log fits into memory and $B_{Log} + \frac{B_{Log}}{M} \cdot B_{Log}$ otherwise. Note that this join appears twice, but that it only needs to be computed once.
\item The cost of storing the results of the first join to disk is denoted as $B_{result_1}$. This equals 0 if the result fits in memory. Otherwise, the number of tuples in the result, which we denote as $|t_1|$, equals the number of pairs of events that directly or indirectly follow each other in some case: $V \cdot (\frac{N}{V} \cdot \frac{N}{V}-1)/2$ on average. This fits into $\frac{|t_1|\cdot 2}{F}$ disk blocks (times 2 because each tuple in the result is a pair of tuples from the original).
\item The cost of the second join is denoted as $B_{join_2}$. This equals 0 if the original log fits into memory. Otherwise, the cost equals $B_{Log} + \frac{|t_1|\cdot 2}{F}/M \cdot B_{Log}$.
\item The cost of storing the result of the second join to disk is denoted as $B_{join_2}$. This equals 0 if the result fits into memory. Otherwise, the number of tuples in the result, which we denote as $|t_2|$, equals the number of pairs of events that indirectly follow each other. This equals the number of pairs of events $|t_1|$ that directly of indirectly follow each other minus the number of pairs of events that directly follow each other: $V\cdot (\frac{N}{V} -1)$ on average. This fits into $\frac{|t_2|\cdot 2}{F}$ disk blocks (times 2 because each tuple in the result is a triple of tuples from the original and then reduced to a pair by projection).
\item The cost of the minus operator is denoted as $B_{minus}$. This equals 0 if the result of the second join fits into memory. Otherwise, it equals $B_{result_1} + \frac{B_{result_1}}{M} \cdot B_{result_2}$.
\end{compactenum}

To generate Figure~\ref{fig:compositecost} we used a tuple size of 80 bytes, a 4 GB buffer size, and a block size of 50, such that there is a total memory size of 1 million blocks. The figure shows two `thresholds' in the computational cost. These thresholds are crossed when a particular intermediate result no longer fits into memory.

The order of the cost can be determined more easily. The order of the cost is determined the cost of the set minus, because this incorporates both intermediate results, which are typically much larger than the original log. Therefore, the order of the cost of computing the intermediate results are $\frac{N^2}{V}/F$. The total order of cost is then obtained by filling these costs out in the right-hand side of the formula for computing the cost of the set minus, which yields: $(\frac{N^2}{V}/F/M) \cdot \frac{N^2}{V}/F$. If we let $M$ be large enough to contain the log itself, but not the intermediate results (i.e. we set $M=\frac{N}{F}$), this can be simplified as: $\frac{N^3}{V^2}/F$.

Summarizing, the execution cost of flexibly retrieving a directly follows relation directly from a database can be as low as retrieving it from a process mining tool, if the database supports a native `directly follows' operator and the process mining tool supports on-database process mining. However, as long as a native `directly follows' operator does not exist, the execution costs increase to third order polynomial cost if the average number of events per case is high (i.e. if intermediate results do not fit into memory anymore).

\subsection{The effect of query optimization}

An advantage of in-database process mining is that it enables query optimization. Query optimization, using the rewrite rules that are defined in section~\ref{ssec:optimization} can greatly reduce the cost of executing a query. As an example, we show the cost of executing the query $>_{c,t} \sigma_{a \theta x} Log$ and the equivalent query $\sigma_{\downarrow a \theta x \land \uparrow a \theta x} >_{c,t} Log$. These costs decrease at least linearly with the fraction of events that match the selection criteria. Let $Q$ be that fraction. Table~\ref{tab:executionorders} shows the different execution situations that can arise. It is possible to either first derive the directly follows relation and then do the selection, or vice versa. It is also possible that the intermediate results fit in memory, or that they must be stored on disk. If the results fit in memory (and the table is indexed with respect to the variable on which the selection is done), then the execution costs are simply the cost of reading the log, or the part that matches the selection criteria, into memory once. If the intermediate results do not fit into memory, the order of the execution cost is $\frac{N^3}{V^2}/F$ as explained in the previous section. Remembering that $B = \frac{N}{F}$ leads to the formulas that are shown in the table.

\begin{table}[!ht]
	\centering
	\caption{Execution cost orders of different execution sequences}\label{tab:executionorders}
	\begin{tabular}{l C{2mm} l C{2mm} l}
		\cmidrule{1-5}
		execution sequence	&&	in memory (blocks) && on disk (blocks) \\
		\cmidrule{1-5}
		$\sigma > Log$ &&	$B$	&&	$B\cdot (\frac{N}{V})^2$\\
		$> \sigma Log$ &&	$B\cdot S$	&&	$Q\cdot B \cdot (\frac{N}{V})^2$\\
		\cmidrule{1-5}
	\end{tabular}
\end{table}

The most dramatic increase occurs when, if the selection is done first and as a consequence the intermediate results fit into memory, while if the selection is done last, the intermediate results do not fit into memory. In practice this is likely to be the case, because the selection can greatly reduce the number of events that are considered. For example, for a log with $N = 10,000$ events over $V = 500$ cases, with a block size of $F = 50$ and a selection fraction $Q = 0.10$, the order of the cost increases from $20$ to $8 \cdot 10^4$ according to the formulas from Table~\ref{tab:executionorders}. The actual computed costs increase (the same order of magnitude) from $21$ (plus one, because we need to read one disk block to load the index that is used to optimize the selection) to $9.5 \cdot 10^4$ using the formulas from the previous section.

This shows that the way in which a query that includes the directly follows operator is executed greatly influences the execution cost. Query optimizers, which are parameterized with equivalence relations from section~\ref{ssec:optimization} and the cost calculation functions from section~\ref{ssec:costfunction}, can automatically determine the optimal execution strategy for a particular query.

\section{Related Work}\label{sec:relwork}

By defining an operator for efficiently extracting the `directly follows' relation between events from a database, this paper has its basis in a long history of papers that focus on optimizing database operations. In particular, it is related to papers that focus on optimizing database operations for data mining purposes~\cite{chen1996,agrawal1993}, of which SAP HANA~\cite{farber2012} is a recent development. The idea of proposing domain-specific database operators has also been applied in other domains, such as spatio-temporal databases~\cite{abraham1999} and scientific databases~\cite{cudremauroux2009}.

By presenting a `directly follows' operator, the primary goal of this paper is to support computationally efficient process mining on a database. There exist some papers that deal with the computational complexity of the process mining algorithms themselves~\cite{maggi2012,bergenthum2007}. Also, in a research agenda for process mining the computational complexity and memory usage of process mining algorithms have been identified as important topics~\cite{aalst2004agenda}. However, this paper focuses on a step that precedes the process mining itself: flexibly querying a database to investigate which information is useful for process mining.

More database-related work from the area of process mining comes from shaping data warehouses specifically for process mining~\cite{eder2002,zurmuehlen2001}. There also exists work that focuses on the extraction of logs from a database~\cite{ingvaldsen2007}.

\section{Conclusions}\label{sec:conclusions}

This paper presents a first step towards in-database process mining. In particular, it completely defines a relational algebraic operator to extract the `directly follows' relation from a log that is stored in a relational database, possibly distributed over multiple tables. The paper presents and proves relational algebraic properties of this operator, in particular that the operator commutes with the selection, projection, and theta join. These equivalence relations can be used for query optimization. Finally, the paper presents and proves formulas to estimate the computational cost of the operator. These formulas can be used in combination with the equivalence relations to determine the most efficient execution strategy for a query. By presenting and proving these properties, the paper provides the complete formal basis for implementing the operator into a specialized DBMS for process mining, which can be used to efficiently and conveniently query a database for process mining information.

Consequently, the obvious next step of this research is to implement the operator into a DBMS. The DBMS and the relational algebraic operators can then be further extended with additional process mining-specific operators, such as an operator to query for execution traces. In addition, more algebraic properties of those operators can be proven to assist with query optimization.

There are some limitations to the equivalence relations that are presented in this paper, in particular with respect to the conditions under which they hold. These limitations restrict the possibilities for query optimization. The extent to which these theoretical limitations impact practical performance of the operator must be investigated and, if possible, mitigated.


\begin{thebibliography}{10}
\providecommand{\url}[1]{\texttt{#1}}
\providecommand{\urlprefix}{URL }

\bibitem{aalst2012}
van~der Aalst, W., {et al.}: Process mining manifesto. In: Proc. of BPM
  Workshops. pp. 169--194 (2012)

\bibitem{aalst2011}
van~der Aalst, W.: Process Mining: Discovery, Conformance and Enhancement of
  Business Processes. Springer (2011)

\bibitem{aalst2004agenda}
van~der Aalst, W., Weijters, A.: Process mining: a research agenda. Computers
  in Industry  53(3),  231--244 (2004)

\bibitem{aalst2004}
van~der Aalst, W., Weijters, A., Maruster, L.: Workflow mining: Discovering
  process models from event logs. IEEE Transactions on Knowledge and Data
  Engineering  16(9),  1128--1142 (2004)

\bibitem{abiteboul1995}
Abiteboul, S., Hull, R., Vianu, V.: Foundations of databases: the logical
  level. Addison-Wesley (1995)

\bibitem{abraham1999}
Abraham, T., Roddick, J.F.: Survey of spatio-temporal databases. GeoInformatica
   3(1),  61--99 (1999)

\bibitem{agrawal1993}
Agrawal, R., Imielinski, T., Swami, A.: Database mining: a performance
  perspective. IEEE Transactions on Knowledge and Data Engineering  5(6),
  914--925 (1993)

\bibitem{bergenthum2007}
Bergenthum, R., Desel, J., Lorenz, R., Mauser, S.: Process Mining Based on
  Regions of Languages, pp. 375--383 (2007)

\bibitem{blasgen1977}
Blasgen, M.W., Eswaran, K.P.: Storage and access in relational data bases. IBM
  Systems Journal  16(4),  363--377 (1977)

\bibitem{chaudhuri1998}
Chaudhuri, S.: An overview of query optimization in relational systems. In:
  Proc. of PODS. pp. 34--43. New York, NY, USA (1998)

\bibitem{chen1996}
Chen, M.S., Han, J., Yu, P.S.: Data mining: an overview from a database
  perspective. IEEE Transactions on Knowledge and Data Engineering  8(6),
  866--883 (1996)

\bibitem{cudremauroux2009}
Cudre-Mauroux, P., Kimura, H., Lim, K.T., Rogers, J., Simakov, R., Soroush, E.,
  Velikhov, P., Wang, D.L., Balazinska, M., Becla, J., DeWitt, D., Heath, B.,
  Maier, D., Madden, S., Patel, J., Stonebraker, M., Zdonik, S.: A
  demonstration of scidb: A science-oriented dbms. Proc. VLDB Endow.  2(2),
  1534--1537 (2009)

\bibitem{dongen2011}
van Dongen, B.: Real-life event logs - hospital log (2011)

\bibitem{eder2002}
Eder, J., Olivotto, G.E., Gruber, W.: A Data Warehouse for Workflow Logs, pp.
  1--15 (2002)

\bibitem{farber2012}
F\"{a}rber, F., Cha, S.K., Primsch, J., Bornh\"{o}vd, C., Sigg, S., Lehner, W.:
  Sap hana database: Data management for modern business applications. SIGMOD
  Rec.  40(4),  45--51 (2012)

\bibitem{guenther2007}
G\"unther, C., van~der Aalst, W.: Fuzzy mining: Adaptive process simplification
  based on multi-perspective metrics. In: Proc. of BPM 2007. pp. 328--343
  (2007)

\bibitem{ingvaldsen2007}
Ingvaldsen, J.E., Gulla, J.A.: Preprocessing support for large scale process
  mining of sap transactions. In: Proc. of {BPM}. pp. 30--41 (2007)

\bibitem{kim1982}
Kim, W.: On optimizing an {SQL}-like nested query. {ACM} Trans. on Database
  Systems  7(3),  443--469 (1982)

\bibitem{maggi2012}
Maggi, F.M., Bose, R.P.J.C., van~der Aalst, W.M.P.: Efficient Discovery of
  Understandable Declarative Process Models from Event Logs, pp. 270--285
  (2012)

\bibitem{zurmuehlen2001}
zur M\"uhlen, M.: Process-driven manaement information systems -- combining
  data warehouses and workflow technology. In: Proc. of ICECR. pp. 550--566
  (2001)

\bibitem{sagiv1978}
Sagiv, Y., Yannakakis, M.: Equivalcen among relational expressions with the
  union and difference operation. In: Proc. of VLDB. pp. 535--548 (1978)

\bibitem{weijters2006}
Weijters, A., van~der Aalst, W., {Alves De Medeiros}, A.: Process mining with
  the heuristics miner-algorithm. Technische Universiteit Eindhoven, Tech. Rep.
  WP  166,  1--34 (2006)

\end{thebibliography}
\end{document}